\newtheorem{theorem}{Theorem}
\newtheorem{lemma}[theorem]{Lemma}
\newtheorem{corollary}[theorem]{Corollary}
\newdefinition{remark}{Remark}
\newproof{proof}{Proof}
\journal{arXiv}
\begin{document}

\begin{frontmatter}



\title{Partial classification of cuspidal simple modules for Virasoro-like algebra\tnoteref{titlenote}}
\tnotetext[titlenote]{Supported by the National Natural Science Foundation of China (No.11171294).}


\author[Jiang]{Jian-Jian Jiang\corref{correspondingauthor}}
\cortext[correspondingauthor]{Corresponding author.}
\address[Jiang]{Chern Institute of Mathematics, Nankai University, Tianjin 300071, P. R. China}
\ead{j.j.jiang@foxmail.com}

\author[Lin]{Wei-Qiang Lin}
\address[Lin]{Department of Mathematics and Statistics, Minnan Normal University, Zhangzhou 363000, Fujian, P. R. China}

\begin{abstract}
Let $\mathfrak P$ be the Lie algebra of Hamiltonian vector fields on the torus, which is also known as the Virasoro-like algebra, a special kind of the so-called Block type Lie algebra. And let $\mathfrak A$ be the Laurent polynomial algebra in two variables. In this paper, by following S.E. Rao's strategy of ``backward induction'', we prove that any quasi-finite simple $(\mathfrak A,\mathfrak P)$-module has to come from Larsson-Shen's construction.
\end{abstract}

\begin{keyword}
Lie algebra \sep Poisson bracket \sep representation \sep cuspidal module


\MSC[2010] 17B10 \sep 17B63 \sep 17B65 \sep 17B66 \sep 17B68 \sep 17B70
\end{keyword}

\end{frontmatter}



\section{Introduction}
\label{introduction}

The group of diffeomorphisms of a manifold plays a significant role in both mathematics and physics. However, it usually happens that such group is quite large and complicated. So one is forced to study its infinitesimal counterpart, namely, the Lie algebra of vector fields on the same manifold. Among others, we are especially interested in the $d$-dimensional torus, due to the fact that they are relatively simple and appear in a lot of fields. It is worth pointing out that the $d=1$ case is associated with the famous Virasoro algebra, which is investigated by both mathematicians and physicists in great detail. The Lie algebras of vector fields on higher-dimensional tori have also attracted many researchers' attention.

Apart from the full algebra of vector fields, one may also consider its notable subalgebra consisting of Hamiltonian vector fields, provided that the dimension $d$ is even. This subalgebra comes from the natural Poisson structure on every even-dimensional torus, which is a fundamental object in theoretical mechanics and symplectic geometry. When $d=2$, the Lie algebra of Hamiltonian vector fields is known to be the Virasoro-like algebra (was firstly introduced in \cite{kirkman}) in the literature on Lie theory. Numerous papers are dedicated to the study of this algebra.

The representation theory of these algebras usually sheds light on the exploration of their structure and geometric properties. A thorough understanding of the representation theory of Virasoro algebra has been achieved. For Lie algebras of vector fields on higher-dimensional tori, some remarkable progresses have been made in the last few years, especially the complete classification of quasi-finite simple modules by Y. Billig and V. Futorny \cite{billig}. As for Virasoro-like algebra, although some interesting modules have been found, and all highest weight type modules have been classified, we still do not know whether there are more quasi-finite simple modules.

In this paper, we attempt to prove that under a certain condition which is natural from the geometric viewpoint, all quasi-finite simple modules over the Virasoro-like algebra are known. This will be carried out by following a strategy which was invented by E.S. Rao in his partial classification of simple modules over the Lie algebra of vector fields on $d$-dimensional torus (see \cite{rao}). To make Rao's strategy work for the Virasoro-like algebra, we use rather different tricks to fulfill each step. All of the results will be stated and proven in an algebraic way.

The paper is organized as follows. In Section \ref{about-virasoro-like-algebra}, we introduce the Virasoro-like algebra and talk about relevant background. The purpose of Section \ref{general-results} is to gather some general definitions and basic results which are repeatedly used in other sections. Section \ref{backward-induction} is devoted to carry out the inductive strategy, which is the most technical part of this paper. In Section \ref{proof-of-main-result}, after a brief analysis and easy computation, we state the classification theorem and prove it by using the results obtained in previous sections.

Throughout the paper, all vector spaces are over the complex number field $\mathbb C$. Let $\mathbf e_1=(1,0), \mathbf e_2=(0,1)$ be the standard basis of the 2-dimensional vector space $\mathbb C^2$, and $\Gamma=\mathbb Z\mathbf e_1\oplus\mathbb Z\mathbf e_2\subseteq\mathbb C^2$ denote an abelian group isomorphic to $\mathbb Z^2$, where $\mathbb Z$ is the group of integers. The determinant of two vectors $\alpha,\beta\in\mathbb C^2$ will be denoted by $\det(\alpha,\beta)$.

Let $\mathfrak A=\mathbb C[z_1^{\pm1}, z_2^{\pm1}]$ denote the Laurent polynomial algebra in two variables. Given $\mathbf r=r_1\mathbf e_1+r_2\mathbf e_2\in\Gamma$, let $\mathbf z^\mathbf r=z_1^{r_1}z_2^{r_2}\in\mathfrak A$, and $\partial_k=z_k\frac{\partial}{\partial z_k}$ be a derivation on $\mathfrak A$. It's not hard to verify that $\{\mathbf z^\mathbf r \partial_k~|~\mathbf r\in\Gamma, k=1,2\}$ is a basis of the derivation Lie algebra $\operatorname{Der}\mathfrak A$. For any $\mathbf r,\mathbf s\in\Gamma$ and $k,\ell=1,2$, we have
\begin{equation*}
[\mathbf z^\mathbf r\partial_k,\,\mathbf z^\mathbf s\partial_\ell]=\mathbf z^{\mathbf r+\mathbf s}(s_k\partial_\ell-r_\ell\partial_k).
\end{equation*}
Note that $\mathbf z^\mathbf r\partial_k(\mathbf z^\mathbf s)=s_k\mathbf z^{\mathbf r+\mathbf s}$ holds, we may now consider the natural abelian extension $\mathfrak A\rtimes\operatorname{Der}\mathfrak A$ satisfying
\begin{equation*}
[\mathbf z^\mathbf r, \mathbf z^\mathbf s]=0, \quad [\mathbf z^\mathbf r\partial_k, \mathbf z^\mathbf s]=s_k\mathbf z^{\mathbf r+\mathbf s}, \quad \mathbf r,\mathbf s\in\Gamma,~~k=1,2.
\end{equation*}
Then $\mathfrak H=\operatorname{Span}\{1,\partial_1,\partial_2\}$ is a maximal abelian subalgebra of the Lie algebra $\mathfrak A\rtimes\operatorname{Der}\mathfrak A$.

Now, assume that $\{h, e, f\}$ is the Chevalley basis of the 3-dimensional simple Lie algebra $\mathfrak{sl}_2$ satisfying $[h,e]=2e,\, [h,f]=-2f,\, [e,f]=h$. And let $V(\lambda)$ be a finite-dimensional simple module over $\mathfrak{sl}_2$ with highest weight $\lambda\in\mathbb N$, where $\mathbb N$ denotes the set of nonnegative integers. For any $\mu\in\mathbb C$ and $\alpha=(\alpha_1,\alpha_2)\in\mathbb C^2$, we can construct an $\mathfrak A\rtimes\operatorname{Der}\mathfrak A$-module $M_\mu^{\alpha}(\lambda)=\mathfrak A\otimes_\mathbb C V(\lambda)$ as follows. First denote $\mathbf z^\mathbf s\otimes v$ by $|v,\mathbf s\rangle$ for convenience, where $v\in V(\lambda)$ and $\mathbf s\in\Gamma$. Then let $\mathbf z^\mathbf r |v,\mathbf s\rangle = |v,\,\mathbf r+\mathbf s\rangle$ and
\begin{gather*}
\mathbf z^\mathbf r\partial_1 |v,\mathbf s\rangle = \left|\left(s_1+\alpha_1+r_1\mu+r_2f+\frac{r_1}{2}h\right) v,\, \mathbf r+\mathbf s\right\rangle,\\
\mathbf z^\mathbf r\partial_2 |v,\mathbf s\rangle = \left|\left(s_2+\alpha_2+r_2\mu+r_1e-\frac{r_2}{2}h\right) v,\, \mathbf r+\mathbf s\right\rangle.
\end{gather*}
It can be verified by direct computation that $M_\mu^\alpha(\lambda)$ is indeed a module over $\mathfrak A\rtimes\operatorname{Der}\mathfrak A$ and also a weight module over $\mathfrak H$ (cf. \cite{larsson,shen}).

If an $\mathfrak A\rtimes\operatorname{Der}\mathfrak A$-module is not only a weight $\mathfrak H$-module, but also a module over the unital associative algebra $\mathfrak A$, then it's called a $(\mathfrak A,\operatorname{Der}\mathfrak A)$-module. S.E. Rao \cite{rao} proved that $M_\mu^\alpha(\lambda)$ is a simple $(\mathfrak A,\operatorname{Der}\mathfrak A)$-module, and any quasi-finite (the dimension of every weight space is finite) simple $(\mathfrak A,\operatorname{Der}\mathfrak A)$-module coincides with $M_\mu^\alpha(\lambda)$ for some $\alpha,\mu$ and $\lambda$. Recently, Y. Billig and V. Futorny \cite{billig} proved that any cuspidal (the dimensions of weight spaces are uniformly bounded by some constant) simple $\operatorname{Der}\mathfrak A$-module induces a cuspidal $(\mathfrak A,\operatorname{Der}\mathfrak A)$-module. In other words, they reduced the classification of cuspidal $\operatorname{Der}\mathfrak A$-modules to the classification of cuspidal $(\mathfrak A,\operatorname{Der}\mathfrak A)$-modules.

On the other hand, V. Mazorchuk and K. Zhao \cite{mazorchuk} proved that every quasi-finite simple $\operatorname{Der}\mathfrak A$-module is either cuspidal or of highest weight type. They also described supports (sets of weights) of each generalized highest weight simple modules. Thus we obtain a complete list of quasi-finite simple modules over the Lie algebra $\operatorname{Der}\mathfrak A$. In this process, we thus understand the crucial intermediate role played by the classification of cuspidal simple $(\mathfrak A,\operatorname{Der}\mathfrak A)$-modules.

\section{About Virasoro-like algebra}
\label{about-virasoro-like-algebra}

To put things on a geometric setting, we may think of $\mathfrak A$ (resp. $\operatorname{Der}\mathfrak A$) as the pure-algebraic version of the commutative algebra $C^\infty(\mathbb T^2)$ of smooth functions (resp. the Lie algebra $\operatorname{Vect}(\mathbb T^2)$ of smooth vector fields) on the 2-dimensional torus $\mathbb T^2$ under a correspondence
\begin{equation*}
\mathbf z^\mathbf r\longleftrightarrow e^{\imath(r_1\theta_1+r_2\theta_2)}, \quad \partial_k\longleftrightarrow-\imath\frac{\partial}{\partial\theta_k}, \quad \imath=\sqrt{-1},~~ k=1,2.
\end{equation*}
Note that $\mathbb T^2$ is a symplectic manifold hence admits a Poisson structure. The Poisson bracket of two functions $\phi$ and $\psi$ is given by
\begin{equation*}
\{\phi,\,\psi\}:=\frac{\partial\phi}{\partial\theta_2}\frac{\partial\psi}{\partial\theta_1}-
\frac{\partial\phi}{\partial\theta_1}\frac{\partial\psi}{\partial\theta_2}.
\end{equation*}
Accordingly, the Hamiltonian vector field generated by the function $\mathbf z^\mathbf r$ is clearly $H(\mathbf r):=\mathbf z^\mathbf r(r_1\partial_2-r_2\partial_1)$. For any $\mathbf r,\mathbf s\in\Gamma$, we have
\begin{equation*}
[H(\mathbf r), H(\mathbf s)]=\det(\mathbf r, \mathbf s) H(\mathbf r+\mathbf s).
\end{equation*}
The Lie subalgebra $$\mathfrak P:=\operatorname{Span}\{H(\mathbf r)~|~\mathbf r\in\Gamma\}$$ of $\operatorname{Der}\mathfrak A$ is usually referred to as the \emph{Virasoro-like algebra}. It's a special kind of the so-called Block type Lie algebra which was studied by many authors in the past.

Since $H(\mathbf r)\mathbf z^\mathbf s=\det(\mathbf r,\mathbf s)\mathbf z^{\mathbf r+\mathbf s}$, the subalgebra $\mathfrak A\rtimes\mathfrak P$ of $\mathfrak A\rtimes\operatorname{Der}\mathfrak A$ satisfies
\begin{equation*}
[\mathbf z^\mathbf r,\mathbf z^\mathbf s]=0,\quad [H(\mathbf r),\mathbf z^\mathbf s]=\det(\mathbf r,\mathbf s)\mathbf z^{\mathbf r+\mathbf s},\quad \mathbf r,\mathbf s\in\Gamma.
\end{equation*}
And the $(\mathfrak A,\operatorname{Der}\mathfrak A)$-module $M_\mu^\alpha(\lambda)$ is also a $(\mathfrak A,\mathfrak P)$-module by restriciton. An easy computation shows that
\begin{equation*}
H(\mathbf r)|v,\mathbf s\rangle=\left|\left\{\det(\mathbf r,\mathbf s+\alpha)+r_1^2e-r_2^2f-r_1r_2h\right\}v,\,\mathbf r+\mathbf s\right\rangle.
\end{equation*}
Note that as a $(\mathfrak A,\mathfrak P)$-module, $M_\mu^\alpha(\lambda)$ is independent of $\mu$, henceforth it will be denoted by $M^\alpha(\lambda)$.

W. Lin and Y. Su \cite{lin} proved that every quasi-finite $\Gamma$-graded simple $\mathfrak P$-module is either cuspidal or of highest weight type. Apart from this, they also determined all generalized highest weight simple modules. Nonetheless, to the best of our knowledge, we are not aware of any more cuspidal simple $\mathfrak P$-module. Inspired by the work of S.E. Rao, Y. Billig and V. Futorny on $\operatorname{Der}\mathfrak A$-modules, we attempt to prove that all quasi-finite simple $(\mathfrak A,\mathfrak P)$-modules come from $M^\alpha(\lambda)$ for some $\alpha$ and $\lambda$. The additional $\mathfrak A$-module structure on a $\mathfrak P$-module is not merely a technical requirement but a natural geometric condition, since the action of vector fields on the sections of vector bundles forms a $(\mathfrak A,\mathfrak P)$-module structure.

In fact, we have not been able to classify all cuspidal simple $\mathfrak P$-modules by extending Billig-Futorny's reduction method. This makes us conjecture that the reduction method for the $\operatorname{Der}\mathfrak A$-modules does not work for the $\mathfrak P$-modules. A bit of explanation is in order, but let us first recall their main ideas briefly.

Every vector field on a $d$-dimensional torus can be approximated by linear combinations of vector fields on $d$ independent solenoids. Here, a solenoid is by definition, a dense 1-dimensional immersed submanifold of a torus. Thus the Lie algebra $\operatorname{Der}\mathfrak A$ can be decomposed (as a vector space) into a direct sum of $d$ solenoidal subalgebras (which are approximate to Lie algebras of vector fields on solenoids). Since every solenoid is diffeomorphic to the real line $\mathbb R$, the properties of the solenoidal subalgebras are very similar to those of the Virasoro algebra. It is the well-developed representation theory of the Virasoro algebra that makes the reduction feasible.

Now compare them with $\mathfrak P$. Since the Poisson bracket always vanishes on any 1-dimensional manifold, it's kind of hopeless to find any nonabelian ``rank one'' subalgebra of $\mathfrak P$, and since abelian subalgebras impose extremely weak restriction on the structure of $\mathfrak P$-module, the reduction fails. From this viewpoint, if we consider the Virasoro algebra as the first nontrivial example of infinite-dimensional graded Lie algebra of rank one, then we may think of the Virasoro-like algebra $\mathfrak P$ as the first nontrivial example of graded Lie algebra of rank two. This algebra deserves more attention than have been paid in the past. We are particularly eager to see if there exists a complete classification of cuspidal simple $\mathfrak P$-modules.

As to the classification of cuspidal $(\mathfrak A,\mathfrak P)$-modules, we follow S.E. Rao's strategy of ``backward induction'' (cf. \cite{rao}). We first confirm the inductive part: If all of the $k^\text{th}$-order differences vanish, then so do the $(k-1)^\text{th}$-order differences. Then we verify the initial conditions: All $k^\text{th}$-order differences vanish for sufficiently large integer $k$. Combining them we assert that all $3^\text{rd}$-order differences vanish, hence each $H(\mathbf r)$ is represented by a family of quadratic polynomial matrices in two variables $r_1$ and $r_2$. Following this, we will be in a position to deduce that any quasi-finite simple $(\mathfrak A,\mathfrak P)$-module is of the form $M^\alpha(\lambda)$.

\section{Definitions and basic results}
\label{general-results}

The purpose of this section is to introduce some important notations, useful formulae, basic results, and fundamental lemmas. They will be repeatedly used in the subsequent sections.

Let $M=\oplus_{\mathbf s\in\Gamma}M_{\mathbf s}$ be a \emph{quasi-finite $\Gamma$-graded $(\mathfrak A,\mathfrak P)$-module}, namely, for any $\mathbf r,\mathbf s\in\Gamma$, $v\in M$, we have
\begin{gather*}
\dim M_{\mathbf s}<+\infty, ~~ \mathbf z^{\mathbf r}M_{\mathbf s}\subseteq M_{\mathbf{r+s}}, ~~ H(\mathbf r)M_{\mathbf s}\subseteq M_{\mathbf{r+s}}, \\
\mathbf z^\mathbf r(\mathbf z^\mathbf s v)=\mathbf z^\mathbf s(\mathbf z^\mathbf r v)=\mathbf z^{\mathbf{r+s}}v, \quad\quad \mathbf z^\mathbf 0 v=1v=v, \\
[H(\mathbf r),\,\mathbf z^\mathbf s]v=H(\mathbf r)(\mathbf z^\mathbf s v)-\mathbf z^\mathbf s(H(\mathbf r)v)=\det(\mathbf r,\mathbf s)\mathbf z^{\mathbf{r+s}}v, \\
[H(\mathbf r),\,H(\mathbf s)]v=H(\mathbf r)(H(\mathbf s)v)-H(\mathbf s)(H(\mathbf r)v)=\det(\mathbf r,\mathbf s)H(\mathbf{r+s})v.
\end{gather*}
Clearly every $\mathbf z^\mathbf r$ is an isomorphism of $M_\mathbf 0$ into $M_\mathbf r$, thus we need to know how $T(\mathbf r):=\mathbf z^{-\mathbf r}H(\mathbf r)$ acts on $M$. The following commutation relations are in order.
\begin{align*}
[T(\mathbf r),\,\mathbf z^\mathbf s]&=[\mathbf z^{-\mathbf r}H(\mathbf r), \mathbf z^\mathbf s]=\mathbf z^{-\mathbf r}[H(\mathbf r),\mathbf z^\mathbf s]=\det(\mathbf r,\mathbf s)\mathbf z^\mathbf s,\\
[T(\mathbf r),\,H(\mathbf s)]&=[\mathbf z^{-\mathbf r}H(\mathbf r),H(\mathbf s)]=[\mathbf z^{-\mathbf r},H(\mathbf s)]H(\mathbf r)+\mathbf z^{-\mathbf r}[H(\mathbf r),H(\mathbf s)]\\
&=\det(\mathbf s,\mathbf r)\mathbf z^{\mathbf s-\mathbf r}H(\mathbf r)+\det(\mathbf r,\mathbf s)\mathbf z^{-\mathbf r}H(\mathbf{r+s})\\
&=\det(\mathbf r,\mathbf s)\mathbf z^\mathbf s\left\{T(\mathbf{r+s})-T(\mathbf r)\right\},\\
[T(\mathbf r),\,T(\mathbf s)]&=[T(\mathbf r),\mathbf z^{-\mathbf s}H(\mathbf s)]=[T(\mathbf r),\mathbf z^{-\mathbf s}]H(\mathbf s)+\mathbf z^{-\mathbf s}[T(\mathbf r),H(\mathbf s)]\\
&=\det(\mathbf r,\mathbf s)\left\{T(\mathbf{r+s})-T(\mathbf r)-T(\mathbf s)\right\}.
\end{align*}
It follows that
$$\mathfrak T:=\operatorname{Span}\{T(\mathbf r)~|~\mathbf r\in\Gamma\}$$
forms a Lie algebra. We will study its finite-dimensional simple modules. To this end, for any $\mathbf u\in\Gamma$, consider the \emph{$\mathbf u$-difference of $T(\mathbf r)$}, viz.,
$$\Delta_\mathbf u T(\mathbf r):=T(\mathbf r)-T(\mathbf{r+u}).$$
It is easy to see that the following holds.
\begin{align*}
\Delta_k(\mathbf r;\mathbf u_1,\cdots,\mathbf u_k)&:=\Delta_{\mathbf u_1}\cdots\Delta_{\mathbf u_k}T(\mathbf r)=T(\mathbf r)-\sum_{1\le i\le k}T(\mathbf r+\mathbf u_i)+\cdots\\
& \quad\quad +(-1)^\ell\sum_{1\le i_1<\cdots<i_\ell\le k}T(\mathbf r+\mathbf u_{i_1}+\cdots+\mathbf u_{i_\ell})\\
& \quad\quad +\cdots+(-1)^k\,T(\mathbf r+\mathbf u_1+\cdots+\mathbf u_k).
\end{align*}
Hence $\Delta_k(\mathbf r;\mathbf u_1,\cdots,\mathbf u_k)=\Delta_k(\mathbf r;\mathbf u_{\sigma(1)},\cdots,\mathbf u_{\sigma(k)})$ for all $k\ge 1$, where $\sigma$ is any permutation of $\{1,2,\cdots,k\}$. In addition, by the definition of difference we see that
\begin{equation*}
\begin{gathered}
\Delta_k(\mathbf r;-\mathbf u_1,\mathbf u_2,\cdots,\mathbf u_k)=-\Delta_k(\mathbf r-\mathbf u_1;\mathbf u_1,\cdots,\mathbf u_k),\\
\Delta_{k+1}(\mathbf r;\mathbf u_1,\cdots,\mathbf u_{k+1})=\Delta_k(\mathbf r;\mathbf u_1,\cdots,\mathbf u_k)-\Delta_k(\mathbf r+\mathbf u_{k+1};\mathbf u_1,\cdots,\mathbf u_k).
\end{gathered}
\end{equation*}
Besides, an easy induction will give us
\begin{gather*}
\Delta_k(\mathbf r;\mathbf u_1+\mathbf u'_1,\mathbf u_2,\cdots,\mathbf u_k)=\Delta_k(\mathbf r;\mathbf u_1,\mathbf u_2,\cdots,\mathbf u_k)\\
\quad\quad +\Delta_k(\mathbf r+\mathbf u_1;\mathbf u'_1,\mathbf u_2,\cdots,\mathbf u_k).
\end{gather*}
Now we are in a position to claim that, for all $k\ge 2$ and $\mathbf r,\mathbf s,\mathbf u_1,\cdots,\mathbf u_k\in\Gamma$, the following identities hold.
\begin{align*}
&[T(\mathbf r),\,\Delta_k(\mathbf s;\mathbf u_1,\cdots,\mathbf u_k)]=\det(\mathbf s,\mathbf r)\Delta_{k+1}(\mathbf s;\mathbf u_1,\cdots,\mathbf u_k,\mathbf r)\\
&\quad\quad\quad\quad\quad\quad\quad +\sum_{i=1}^k\det(\mathbf r,\mathbf u_i)\Delta_k(\mathbf s+\mathbf u_i;\mathbf u_1,\cdots,\widehat{\mathbf u_i},\cdots,\mathbf u_k,\mathbf r)\\
&\quad\quad\quad\quad\quad~~ =\det(\mathbf s+\mathbf u_1+\cdots+\mathbf u_k,\mathbf r)\Delta_{k+1}(\mathbf s;\mathbf u_1,\cdots,\mathbf u_k,\mathbf r)\\
&\quad\quad\quad\quad\quad\quad\quad +\sum_{i=1}^k\det(\mathbf r,\mathbf u_i)\Delta_k(\mathbf s;\mathbf u_1,\cdots,\widehat{\mathbf u_i},\cdots,\mathbf u_k,\mathbf r).
\end{align*}

\begin{proof}
The second equality follows from the first one. We prove the first equality by induction. For $k=2$, unwrapping $[T(\mathbf r),\,\Delta_2(\mathbf s;\mathbf u_1,\mathbf u_2)]$, we get
\begin{align*}
\text{LHS}&=[T(\mathbf r),\,T(\mathbf s)-T(\mathbf s+\mathbf u_1)-T(\mathbf s+\mathbf u_2)+T(\mathbf s+\mathbf u_1+\mathbf u_2)] \\
&=\det(\mathbf r,\mathbf s)\{T(\mathbf{r+s})-T(\mathbf r)-T(\mathbf s)\} \\
&\quad -\det(\mathbf r,\mathbf s+\mathbf u_1)\{T(\mathbf r+\mathbf s+\mathbf u_1)-T(\mathbf r)-T(\mathbf s+\mathbf u_1)\} \\
&\quad -\det(\mathbf r,\mathbf s+\mathbf u_2)\{T(\mathbf r+\mathbf s+\mathbf u_2)-T(\mathbf r)-T(\mathbf s+\mathbf u_2)\} \\
&\quad +\det(\mathbf r,\mathbf s+\mathbf u_1+\mathbf u_2)\{T(\mathbf r+\mathbf s+\mathbf u_1+\mathbf u_2) -T(\mathbf r)-T(\mathbf s+\mathbf u_1+\mathbf u_2)\}.
\end{align*}
By rearranging and wrapping up, it's easy to obtain the desired result. From now on, we assume the equality holds for some $k\ge 2$, then following a rather tedious but easy calculation one arrives at
\begin{align*}
&[T(\mathbf r),\,\Delta_{k+1}(\mathbf s;\mathbf u_1,\cdots,\mathbf u_{k+1})]=[T(\mathbf r),\,\Delta_k(\mathbf s;\mathbf u_1,\cdots,\mathbf u_k)]\\
&\quad\quad -[T(\mathbf r),\,\Delta_k(\mathbf s+\mathbf u_{k+1};\mathbf u_1,\cdots,\mathbf u_k)]\\
&~~ =\det(\mathbf s,\mathbf r)\Delta_{k+1}(\mathbf s;\mathbf u_1,\cdots,\mathbf u_k,\mathbf r)\\
&\quad\quad +\sum_{i=1}^k\det(\mathbf r,\mathbf u_i)\Delta_k(\mathbf s+\mathbf u_i;\mathbf u_1,\cdots,\widehat{\mathbf u_i},\cdots,\mathbf u_k,\mathbf r)\\
&\quad\quad -\det(\mathbf s+\mathbf u_{k+1},\mathbf r)\Delta_{k+1}(\mathbf s+\mathbf u_{k+1};\mathbf u_1,\cdots,\mathbf u_k,\mathbf r)\\
&\quad\quad -\sum_{i=1}^k\det(\mathbf r,\mathbf u_i)\Delta_k(\mathbf s+\mathbf u_{k+1}+\mathbf u_i;\mathbf u_1,\cdots,\widehat{\mathbf u_i},\cdots,\mathbf u_k,\mathbf r)\\
&~~ =\det(\mathbf s,\mathbf r)\Delta_{k+2}(\mathbf s;\mathbf u_1,\cdots,\mathbf u_{k+1},\mathbf r)\\
&\quad\quad +\det(\mathbf r,\mathbf u_{k+1})\Delta_{k+1}(\mathbf s+\mathbf u_{k+1};\mathbf u_1,\cdots,\mathbf u_k,\widehat{\mathbf u_{k+1}},\mathbf r)\\
&\quad\quad +\sum_{i=1}^k\det(\mathbf r,\mathbf u_i)\Delta_{k+1}(\mathbf s+\mathbf u_i;\mathbf u_1,\cdots,\widehat{\mathbf u_i},\cdots,\mathbf u_{k+1},\mathbf r)\\
&~~ =\det(\mathbf s,\mathbf r)\Delta_{k+2}(\mathbf s;\mathbf u_1,\cdots,\mathbf u_{k+1},\mathbf r)\\
&\quad\quad +\sum_{i=1}^{k+1}\det(\mathbf r,\mathbf u_i)\Delta_{k+1}(\mathbf s+\mathbf u_i;\mathbf u_1,\cdots,\widehat{\mathbf u_i},\cdots,\mathbf u_{k+1},\mathbf r).
\end{align*}
This proves the identities.
\end{proof}

From these results, it is not hard to see that for $k\ge 2$,
$$\mathfrak I_k:=\operatorname{Span}\{\Delta_k(\mathbf r;\mathbf u_1,\cdots,\mathbf u_k)~|~\mathbf r,\mathbf u_1,\cdots,\mathbf u_k\in\Gamma\}$$
is an ideal of $\mathfrak T$, and $\mathfrak I_{k+1}\subseteq \mathfrak I_k$. This chain of ideals is our main object for consideration. We end this section by providing a few useful Lemmas.

\begin{lemma}
\label{nilpotency}
Let $V$ be a finite-dimensional vector space, $D$ and $E$ be two linear transformations on $V$ such that $[D,E]=\delta E$. If $\delta\ne 0$, then $E$ is nilpotent.
\end{lemma}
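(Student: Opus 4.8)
The plan is to first extract an algebraic consequence of the relation $[D,E]=\delta E$ that involves all powers of $E$, and then invoke finite-dimensionality to force some power to vanish. The key observation is that the bracket relation propagates to powers of $E$, namely $[D,E^n]=n\delta E^n$ for every integer $n\ge 1$. I would prove this by a straightforward induction: the base case $n=1$ is the hypothesis, and the inductive step follows from the Leibniz rule for the bracket,
\begin{equation*}
[D,E^{n+1}]=[D,E^n]E+E^n[D,E]=n\delta E^n E+E^n\delta E=(n+1)\delta E^{n+1}.
\end{equation*}

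With this identity in hand, I would interpret each \emph{nonzero} power $E^n$ as an eigenvector of the operator $\operatorname{ad}_D\colon X\mapsto[D,X]$ acting on the finite-dimensional space $\operatorname{End}(V)$, with eigenvalue $n\delta$. Since $\delta\ne 0$, the scalars $\delta,2\delta,3\delta,\cdots$ are pairwise distinct, so any collection of nonzero powers $E^n$ consists of eigenvectors attached to distinct eigenvalues and is therefore linearly independent. As $\dim\operatorname{End}(V)=(\dim V)^2<\infty$, only finitely many of the $E^n$ can be nonzero; hence $E^n=0$ for some $n$, which is exactly the nilpotency of $E$.

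As an alternative to the dimension argument, I could instead take traces: since the trace of any commutator vanishes, the identity gives $0=\operatorname{tr}[D,E^n]=n\delta\operatorname{tr}(E^n)$, so that $\operatorname{tr}(E^n)=0$ for all $n\ge 1$ (using $\delta\ne 0$ and $\operatorname{char}\mathbb C=0$). The vanishing of all power sums of the eigenvalues of $E$ then forces every eigenvalue to be zero via Newton's identities, again yielding nilpotency.

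I do not anticipate a serious obstacle here; the only points demanding a little care are the passage from the bracket relation to the statement about powers, and the observation that eigenvectors belonging to the distinct eigenvalues $n\delta$ must be linearly independent. Both the dimension argument and the trace argument are robust, and I would present the former as the main line since it is the shortest and avoids any appeal to Newton's identities.
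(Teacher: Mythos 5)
Your proof is correct, but it runs along a genuinely different line from the paper's. The paper stays inside $V$: it takes the generalized eigenspace decomposition $V=\oplus_{\lambda\in\mathbb C}V_\lambda$ with respect to $D$, observes that the relation $[D,E]=\delta E$ forces $EV_\lambda\subseteq V_{\lambda+\delta}$, and concludes that since only finitely many $V_\lambda$ are nonzero and $\delta\ne 0$, a sufficiently high power of $E$ pushes every $V_\lambda$ out of the spectrum and hence vanishes. You instead move one level up, to $\operatorname{End}(V)$: the identity $[D,E^n]=n\delta E^n$ exhibits the nonzero powers of $E$ as eigenvectors of $\operatorname{ad}_D$ with pairwise distinct eigenvalues $n\delta$, so finite-dimensionality of $\operatorname{End}(V)$ forces some power to be zero. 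Both arguments are finite-dimensionality arguments exploiting the same ``shift by $\delta$'' phenomenon, but yours avoids generalized eigenspace (Jordan) theory entirely, needing only the linear independence of eigenvectors for distinct eigenvalues; the paper's version, in exchange, gives a sharper bound on the nilpotency index (the number of distinct eigenvalues of $D$, at most $\dim V$, versus your $(\dim V)^2$) and makes visible how $E$ moves the $D$-spectral subspaces, which is closer in spirit to how the lemma is actually deployed later (in Lemma \ref{inductive-part}, where eigenvalue bookkeeping matters). Your trace alternative is also valid over $\mathbb C$, though it leans on Newton's identities and characteristic zero, neither of which the other two arguments need.
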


\begin{proof}
Consider the generalized weight space decomposition of $V$ by $D$, namely, $V=\oplus_{\lambda\in\mathbb C}V_\lambda$, where $V_\lambda=\{v\in V~|~(D-\lambda)^n v=0,\,\exists n\in\mathbb N\}$. One can easily see that $EV_\lambda\subseteq V_{\lambda+\delta}$. Since $\delta\ne 0$ and $\dim V<+\infty$, $E^n V_\lambda=0$ for sufficiently large $n$. In other words, $E$ is nilpotent.
\end{proof}

\begin{lemma}
\label{ideal-vanish}
Let $V$ be a simple $\mathfrak T$-module, and $k\ge 2$. If there exists some nonzero $v\in V$ such that $\mathfrak I_kv=0$, then $\mathfrak I_kV=0$.
\end{lemma}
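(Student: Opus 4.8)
The plan is to exhibit the annihilator of $\mathfrak I_k$ as a $\mathfrak T$-submodule of $V$ and then invoke simplicity. Concretely, I would set
$$W:=\{w\in V\mid \mathfrak I_k w=0\},$$
which is plainly a subspace and is nonzero since it contains the given $v\ne 0$. The entire content of the lemma then reduces to the single claim that $W$ is stable under the action of $\mathfrak T$; once this is established, simplicity of $V$ forces $W=V$, which is precisely the assertion $\mathfrak I_k V=0$.

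To prove $\mathfrak T W\subseteq W$, I would fix $T(\mathbf r)\in\mathfrak T$, $w\in W$, and an arbitrary element $x\in\mathfrak I_k$, and compute $x\cdot\bigl(T(\mathbf r)w\bigr)$ by moving $x$ past $T(\mathbf r)$ through the bracket:
$$x\cdot\bigl(T(\mathbf r)w\bigr)=T(\mathbf r)\cdot(x w)-[T(\mathbf r),\,x]\,w.$$
The first term vanishes because $xw=0$ by the definition of $W$. For the second term I would invoke the fact, already established earlier in this section, that $\mathfrak I_k$ is an ideal of $\mathfrak T$, so that $[T(\mathbf r),\,x]\in\mathfrak I_k$ and hence $[T(\mathbf r),\,x]\,w=0$ as well, again because $w\in W$. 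Thus $x\cdot\bigl(T(\mathbf r)w\bigr)=0$ for every $x\in\mathfrak I_k$, which says exactly that $T(\mathbf r)w\in W$.

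Since the elements $T(\mathbf r)$ span $\mathfrak T$, this shows $W$ is a submodule, and the proof concludes by simplicity. I do not anticipate a genuine obstacle: the only input beyond elementary module theory is the ideal property of $\mathfrak I_k$, which has already been derived from the bracket identities above. The sole point meriting a word of care is the legitimacy of the displayed computation, namely that both $x$ and $[T(\mathbf r),\,x]$ act on $V$; this is automatic since each lies in $\mathfrak T$ and $V$ is a $\mathfrak T$-module. In essence, the argument is the standard principle that the annihilator of an ideal is a submodule, applied to the chain $\mathfrak I_{k+1}\subseteq\mathfrak I_k$ of this section.
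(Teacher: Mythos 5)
Your proof is correct and follows exactly the paper's argument: form the annihilator $W=\{w\in V\mid\mathfrak I_k w=0\}$, use the ideal property of $\mathfrak I_k$ via the identity $\mathfrak I_k(\mathfrak T w)\subseteq[\mathfrak I_k,\mathfrak T]w+\mathfrak T\,\mathfrak I_k w=0$ to show $W$ is $\mathfrak T$-invariant, and conclude $W=V$ by simplicity. No gaps; this is the same proof, spelled out in slightly more detail.
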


\begin{proof}
Let $W=\{w\in V~|~\mathfrak I_kw=0\}\ne 0$. Since $\mathfrak I_k$ is an ideal of $\mathfrak T$, for every $w\in W$, we have $\mathfrak I_k(\mathfrak Tw)\subseteq[\mathfrak I_k,\mathfrak T]w+\mathfrak T\mathfrak I_kw=0$, i.e., $W$ is $\mathfrak T$-invariant. By the simplicity of $V$, we know that $W=V$, hence $\mathfrak I_kV=0$.
\end{proof}

\section{Backward induction}
\label{backward-induction}

In this section, we aim to prove the vanishing of higher-order differences $\Delta_k$ ($k\ge 3$) by using a strategy of ``backward induction'', discovered by S.E. Rao \cite{rao} in his study of Lie algebras of vector fields on the tori.

To begin with, let us investigate the effect of $\mathfrak I_{k+1}=0$ on the behaviour of $\mathfrak I_k$. For every fixed integer $k\ge 2$, suppose $\mathfrak I_{k+1}=0$, then the following neatly written equations follow (consult Section \ref{general-results})
\begin{gather*}
\Delta_k(\mathbf r;\mathbf u_1,\cdots,\mathbf u_k)=\Delta_k(\mathbf 0;\mathbf u_1,\cdots,\mathbf u_k),\\
\Delta_k(\mathbf 0;-\mathbf u_1,\mathbf u_2,\cdots,\mathbf u_k)=-\Delta_k(\mathbf 0;\mathbf u_1,\mathbf u_2\cdots,\mathbf u_k),\\
\Delta_k(\mathbf 0;\mathbf u_1+\mathbf u'_1,\mathbf u_2,\cdots,\mathbf u_k)=\Delta_k(\mathbf 0;\mathbf u_1,\cdots,\mathbf u_k)+\Delta_k(\mathbf 0;\mathbf u'_1,\mathbf u_2,\cdots,\mathbf u_k).
\end{gather*}
Since $\mathbf u_1,\cdots,\mathbf u_k\in\Gamma\subseteq\mathbb C^2$, one may identify $\Delta_k(\mathbf r;\mathbf u_1,\cdots,\mathbf u_k)$ with the symmetric tensor $\mathbf u_1\cdots\mathbf u_k\in\mathcal{S}^k(\mathbb C^2)$. Moreover, for every $\mathbf u\in\Gamma$ and integral matrix
$\begin{pmatrix}
  a & b \\
  c & d \\
\end{pmatrix}$, denote $(a u_1+b u_2)\mathbf e_1+(c u_1+d u_2)\mathbf e_2\in\Gamma$ by
$\begin{pmatrix}
  a & b \\
  c & d \\
\end{pmatrix}\mathbf u$, then a simple computation shows that
\begin{align*}
T(\mathbf r)(\mathbf u_1\cdots\mathbf u_k)&:=[T(\mathbf r),\,\Delta_k(\mathbf 0;\mathbf u_1,\cdots,\mathbf u_k)]\\
&=\sum_{i=1}^k\mathbf u_1\cdots\left\{\begin{pmatrix}
                                       -r_1r_2 & r_1^2 \\
                                       -r_2^2 & r_1r_2 \\
                                     \end{pmatrix}\mathbf u_i\right\}\cdots\mathbf u_k.
\end{align*}
In other words, the adjoint action of $\mathfrak T$ on $\mathfrak I_k$ can be thought of as the $k^\text{th}$ symmetric power of the 2-dimensional simple $\mathfrak{sl}_2$-module under the following correspondences
\begin{equation*}
T(\mathbf r)\longleftrightarrow\begin{pmatrix}
   -r_1r_2 & r_1^2 \\
   -r_2^2 & r_1r_2 \\
\end{pmatrix}, \quad~~ \Delta_k(\mathbf r;\mathbf u_1,\cdots,\mathbf u_k)\longleftrightarrow\mathbf u_1\cdots\mathbf u_k.
\end{equation*}
Such perspective will greatly facilitate our computations and arguments.

\begin{lemma}
\label{inductive-part}
Let $V$ be a finite-dimensional simple $\mathfrak T$-module, and for some $k\ge 3$, $\mathfrak I_{k+1}V=0$, then $\mathfrak I_kV=0$.
\end{lemma}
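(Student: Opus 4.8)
The plan is to invoke Lemma~\ref{ideal-vanish}: since $V$ is simple and $k\ge 2$, it suffices to exhibit a single nonzero vector $v\in V$ with $\mathfrak I_kv=0$. To produce such a $v$, I would analyze how $\mathfrak I_k$ acts on $V$ through the $\mathcal S^k(\mathbb C^2)$-structure made available by the hypothesis $\mathfrak I_{k+1}V=0$, and show that the image of $\mathfrak I_k$ in $\operatorname{End}(V)$ is an abelian algebra consisting entirely of nilpotent operators. Engel's theorem then furnishes a common kernel vector $v$, and Lemma~\ref{ideal-vanish} upgrades $\mathfrak I_kv=0$ to $\mathfrak I_kV=0$. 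First I would record the consequences of $\mathfrak I_{k+1}V=0$ already isolated in the text: on $V$ every $\Delta_k(\mathbf r;\mathbf u_1,\dots,\mathbf u_k)$ is translation invariant and multilinear in the $\mathbf u_i$, so that $\xi\mapsto\Delta_k(\mathbf 0;\xi)$ is a $\mathfrak T$-module map from $\mathcal S^k(\mathbb C^2)$ onto the image $\mathfrak I_k|_V$, and the commutator collapses to $[T(\mathbf r),\Delta_k(\mathbf 0;\mathbf u_1,\dots,\mathbf u_k)]=\sum_i\det(\mathbf r,\mathbf u_i)\,\Delta_k(\mathbf 0;\mathbf u_1,\dots,\widehat{\mathbf u_i},\dots,\mathbf u_k,\mathbf r)$.

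The commutativity step is the heart of the argument. Expanding $\Delta_k(\mathbf 0;\mathbf w_1,\dots,\mathbf w_k)=\sum_{J}(-1)^{|J|}T(\mathbf w_J)$ with $\mathbf w_J=\sum_{j\in J}\mathbf w_j$ and applying the collapsed commutator termwise, one finds that $[\Delta_k(\mathbf 0;\mathbf w_\bullet),\Delta_k(\mathbf 0;\mathbf u_\bullet)]$ equals $\sum_{i}\sum_J(-1)^{|J|}g_i(\mathbf w_J)$, where each $g_i(\mathbf x)=\det(\mathbf x,\mathbf u_i)\,\Delta_k(\mathbf 0;\mathbf u_1,\dots,\widehat{\mathbf u_i},\dots,\mathbf u_k,\mathbf x)$ is a product of two additive functions of $\mathbf x$, i.e.\ an $\operatorname{End}(V)$-valued expression of ``degree two''. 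But $\sum_J(-1)^{|J|}g_i(\mathbf w_J)$ is, up to sign, the $k$-fold iterated finite difference of $g_i$ evaluated at the origin, and such differences annihilate any degree-two expression after three steps. Since $k\ge 3$, every term vanishes and $\mathfrak I_k|_V$ is abelian. This is exactly where the hypothesis $k\ge 3$ is indispensable, and I expect it to be the main obstacle: one must organize the double expansion cleanly enough to recognize the finite-difference cancellation and to check that mere $\mathbb Z$-additivity of the two factors already suffices.

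For nilpotency I would exploit the $\mathfrak{sl}_2$-picture. Writing $N_{\mathbf r}$ for the matrix attached to $T(\mathbf r)$, the adjoint action of $T(\mathbf r)$ on $\mathfrak I_k$ is the $k$-th symmetric power of $N_{\mathbf r}$, so the homomorphism property of $\operatorname{ad}$ carries $[T(\mathbf u),T(\mathbf v)]$ to the derivation induced by $[N_{\mathbf u},N_{\mathbf v}]$. A short computation gives $[N_{\mathbf u},N_{\mathbf v}]\mathbf u=-\det(\mathbf u,\mathbf v)^2\,\mathbf u$, so for the pure symmetric power $X_{\mathbf u}:=\Delta_k(\mathbf 0;\mathbf u,\dots,\mathbf u)$ (corresponding to $\mathbf u^{k}$) and any $\mathbf v\in\Gamma$ with $\det(\mathbf u,\mathbf v)\ne 0$, the operator $D:=[T(\mathbf u),T(\mathbf v)]$ satisfies $[D,X_{\mathbf u}]=-k\det(\mathbf u,\mathbf v)^2\,X_{\mathbf u}$ on $V$. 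Since $-k\det(\mathbf u,\mathbf v)^2\ne 0$, Lemma~\ref{nilpotency} shows $X_{\mathbf u}$ is nilpotent.

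Finally, because the powers $\mathbf u^{k}$ span $\mathcal S^k(\mathbb C^2)$ (a Vandermonde argument with finitely many integral $\mathbf u$ already suffices), the nilpotent elements $X_{\mathbf u}$ span $\mathfrak I_k|_V$; being pairwise commuting nilpotents by the previous step, their span consists entirely of nilpotent operators. Engel's theorem then yields a nonzero $v$ with $\mathfrak I_kv=0$, and Lemma~\ref{ideal-vanish} completes the proof. I would remark that a naive weight argument with a single Cartan element stumbles on the zero-weight space present when $k$ is even; passing instead to the highest-weight vectors $\mathbf u^{k}$ for varying $\mathbf u$, each seen through its own $\mathfrak{sl}_2$-triple $\{N_{\mathbf u},N_{\mathbf v}\}$, circumvents this difficulty entirely.
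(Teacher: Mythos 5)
Your proposal is correct, and while its first half coincides with the paper's argument, its second half takes a genuinely different route. Both proofs start the same way: use $\mathfrak I_{k+1}V=0$ to identify $\mathfrak I_k|_V$ with a quotient of $\mathcal S^k(\mathbb C^2)$ and deduce $[\mathfrak I_k,\mathfrak I_k]V=0$ for $k\ge 3$; the paper asserts this in one sentence (the adjoint action of $T(\mathbf r)$ on $\mathfrak I_k$ is quadratic in $r_1,r_2$, so $k$-fold differences with $k\ge3$ kill it), and your expansion $\sum_J(-1)^{|J|}g_i(\mathbf w_J)$ is precisely the detailed verification of that sentence. The divergence comes next. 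The paper forms the solvable algebra $\mathfrak S=\mathfrak I_k+\mathbb C T(\mathbf e_2)$, applies Lie's theorem to get a common eigenvector of the monomial basis $\Delta^\ell$, and kills the eigenvalues using Lemma~\ref{nilpotency} with the single Cartan element $T(\mathbf e_1)+T(\mathbf e_2)-T(\mathbf e_1+\mathbf e_2)=-[T(\mathbf e_1),T(\mathbf e_2)]$, whose eigenvalue on $\Delta^\ell$ is $2\ell-k$; the zero-weight case $2\ell=k$ (even $k$) then needs the separate identity $(\ell+1)\Delta^\ell v=[-T(\mathbf e_2),\Delta^{\ell+1}]v=0$. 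You instead work only with the pure powers $X_{\mathbf u}=\Delta_k(\mathbf 0;\mathbf u,\dots,\mathbf u)$, each a highest-weight vector for its own pair: your formula $[N_{\mathbf u},N_{\mathbf v}]\mathbf u=-\det(\mathbf u,\mathbf v)^2\,\mathbf u$ is correct (indeed $N_{\mathbf r}\mathbf x=\det(\mathbf r,\mathbf x)\,\mathbf r$, so your $D$ generalizes the paper's Cartan element), the eigenvalue $-k\det(\mathbf u,\mathbf v)^2$ is never zero, Lemma~\ref{nilpotency} applies uniformly, and the Vandermonde spanning argument together with the fact that a span of commuting nilpotents consists of nilpotents lets Engel's theorem furnish the common kernel vector. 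What the paper's route buys is economy: only the monomial basis and Lie's theorem, at the cost of a parity case split and an ad hoc computation for the zero-weight vector. What your route buys is uniformity and a more conceptual flavor: no case split at all, with Engel replacing Lie. Both proofs funnel through Lemma~\ref{ideal-vanish} at the end, exactly as they must.
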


\begin{proof}
Note that the adjoint action of $T(\mathbf r)$ on $\mathfrak I_k$ can be represented by a quadratic polynomial matrix in two variables, say, $r_1$ and $r_2$. Therefore, all $3^\text{rd}$-order differences of $T(\mathbf r)$ vanish, hence $[\mathfrak I_k,\mathfrak I_k]=0$ for all $k\ge 3$. It follows that $\mathfrak S=\mathfrak I_k+\mathbb CT(\mathbf e_2)$ satisfies $\mathfrak S^{(1)}=[\mathfrak S,\mathfrak S]\subseteq\mathfrak I_k$, $\mathfrak S^{(2)}=[\mathfrak S^{(1)},\mathfrak S^{(1)}]=0$, i.e., $\mathfrak S\subseteq\mathfrak{gl}(V)$ is a solvable Lie algebra.

Denote $\Delta^\ell:=\Delta_k(\mathbf 0;\mathbf e_1,\cdots,\mathbf e_1,\mathbf e_2,\cdots,\mathbf e_2)$, where $\mathbf e_1$ appears $\ell$ times. Thanks to Lie's Theorem (see, e.g., \cite{humphreys}), there exists a nonzero $v\in V$ and $\mu_\ell,\mu\in\mathbb C$ such that $\Delta^\ell v=\mu_\ell v$, $T(\mathbf e_2)v=\mu v$. On the other hand, note that the following holds.
$$[T(\mathbf e_1)+T(\mathbf e_2)-T(\mathbf e_1+\mathbf e_2),\,\Delta^\ell]=(2\ell-k)\Delta^\ell.$$
By Lemma \ref{nilpotency} we know that if $2\ell\ne k$, then $\Delta^\ell$ is nilpotent, hence $\mu_\ell=0$. Even if $2\ell=k$, we still have $\Delta^\ell v=0$ by the equation
$$(\ell+1)\Delta^\ell v=[-T(\mathbf e_2),\Delta^{\ell+1}]v=\Delta^{\ell+1}T(\mathbf e_2)v-T(\mathbf e_2)\Delta^{\ell+1}v=0.$$
Since $\{\Delta^\ell~|~0\le\ell\le k\}$ linearly spans $\mathfrak I_k$, we get $\mathfrak I_k v=0$, and Lemma \ref{ideal-vanish} tells us that $\mathfrak I_k V=0$.
\end{proof}

Our next goal is a verification of the ``initial conditions''. To this end, for any $\mathbf r,\mathbf s\in\Gamma$ subject to the relation $\det(\mathbf r,\mathbf s)=\pm1$, let $\Theta$ be the group automorphism of $\Gamma$ which maps $(\mathbf e_1,\mathbf e_2)$ to $(\mathbf r,\mathbf s)$, then the linear mapping
$$\langle\mathbf r,\mathbf s\rangle : T(\mathbf u) \mapsto \det(\mathbf r,\mathbf s)\,T\left(\Theta(\mathbf u)\right)$$
is an automorphism of the Lie algebra $\mathfrak T$. Henceforth we will refer to $\langle\mathbf r,\mathbf s\rangle$ as a \emph{degree-transform of \,$\mathfrak T$}.

\begin{lemma}
\label{initial-condition}
Let $V$ be a finite-dimensional $\mathfrak T$-module. For sufficiently large integer $k$, we have $\mathfrak I_kV=0$.
\end{lemma}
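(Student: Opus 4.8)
The plan is to pass to the finite-dimensional image of the representation $\rho\colon\mathfrak T\to\mathfrak{gl}(V)$ and to exploit a clash between the unbounded $\mathfrak{sl}_2$-weights carried by the ideals $\mathfrak I_k$ and the necessarily bounded spectrum of operators on the finite-dimensional space $V$. First I would observe that $\rho(\mathfrak I_2)\supseteq\rho(\mathfrak I_3)\supseteq\cdots$ is a descending chain of ideals of the finite-dimensional Lie algebra $\rho(\mathfrak T)$, hence stabilizes to an ideal $J=\rho(\mathfrak I_N)=\rho(\mathfrak I_{N+1})=\cdots$. Proving the lemma is then equivalent to showing $J=0$, since $J=0$ forces $\mathfrak I_kV=0$ for all $k\ge N$.

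The quantitative input comes from the element $C:=T(\mathbf e_1)+T(\mathbf e_2)-T(\mathbf e_1+\mathbf e_2)\in\mathfrak I_2$ already used in the proof of Lemma~\ref{inductive-part}: under the matrix correspondence of Section~\ref{backward-induction} it corresponds to the Cartan element of $\mathfrak{sl}_2$, so $\operatorname{ad}C$ acts on each $\mathfrak I_k/\mathfrak I_{k+1}\cong\mathcal S^k(\mathbb C^2)$ with weights $-k,-k+2,\dots,k$, and the top difference $P_k:=\Delta_k(\mathbf 0;\mathbf e_1,\dots,\mathbf e_1)$ is a highest weight vector: one checks $[T(\mathbf e_1),P_k]=0$ exactly and $[C,P_k]\equiv kP_k\pmod{\mathfrak I_{k+1}}$. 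On the other hand $\rho(C)$ has only finitely many eigenvalues on $V$, so $\operatorname{ad}\rho(C)$ on $\mathfrak{gl}(V)$ has all eigenvalues bounded by the spread $B$ of that spectrum. If $J\neq0$, then for $k>B$ the vector $P_k$, together with its $\operatorname{ad}T(\mathbf e_2)$-descendants filling out a copy of $\mathcal S^k(\mathbb C^2)$ inside $\mathfrak I_k$, carries $\operatorname{ad}C$-weights up to $k>B$, which ought to collide with the bound $B$. Lemma~\ref{nilpotency} is the natural device for converting nonzero weights into nilpotency along the way, and the degree-transforms $\langle\mathbf r,\mathbf s\rangle$—being automorphisms of $\mathfrak T$ that permute the $T(\mathbf u)$ up to sign and hence preserve every $\mathfrak I_k$—will let me rotate $\mathbf e_1$ to an arbitrary primitive direction, so the conclusion covers all generators of $\mathfrak I_k$ rather than just those in the $\mathbf e_1$-tower.

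The hard part is that the weight identity $[C,P_k]\equiv kP_k$ holds only modulo $\mathfrak I_{k+1}$: since, as the paper itself stresses, $\mathfrak T$ carries no genuine $\mathfrak{sl}_2$-triple (the bracket of two $T$'s always produces an honest second difference), $\rho(P_k)$ is not an exact eigenvector of $\operatorname{ad}\rho(C)$ but only satisfies $(\operatorname{ad}\rho(C)-k)\rho(P_k)=\rho(Q_{k+1})$ with $Q_{k+1}\in\mathfrak I_{k+1}$, and at the stabilized level $\rho(Q_{k+1})\in\rho(\mathfrak I_{k+1})=J$ as well, so the correction does not simply vanish. Turning ``large graded weight versus bounded spectrum'' into an actual vanishing therefore requires controlling these corrections. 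I expect to do this through the minimal polynomial $p$ of $\operatorname{ad}\rho(C)|_J$, whose roots all lie in $[-B,B]$: intertwining gives $\rho\circ p(\operatorname{ad}C)=p(\operatorname{ad}\rho(C))\circ\rho=0$ on $\mathfrak I_k$, whence $p(\operatorname{ad}C)\mathfrak I_k\subseteq\ker\rho$, and since $p(k)\neq0$ this yields $p(k)\rho(P_k)\in\rho(\mathfrak I_{k+1})$. A downward induction on the weight tower—legitimized by the stabilization of the chain and by the invertibility of $\operatorname{ad}\rho(C)-k$ for $k>B$—should then force $\rho(P_k)=0$, and assembling this over all primitive directions via the degree-transforms gives $J=0$. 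Making this induction and the accompanying resolvent estimates precise is the technical heart of the proof.
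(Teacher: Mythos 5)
Your structural observations are all correct (the chain $\rho(\mathfrak I_2)\supseteq\rho(\mathfrak I_3)\supseteq\cdots$ stabilizes at an ideal $J$; $[T(\mathbf e_1),P_k]=0$ exactly; $[C,P_k]\equiv kP_k\pmod{\mathfrak I_{k+1}}$), and you honestly flag the obstruction: the weight identity holds only modulo $\mathfrak I_{k+1}$. But your proposed repair is circular, and this is a genuine gap. Once the chain stabilizes, $\rho(\mathfrak I_{k+1})=\rho(\mathfrak I_k)=J$ for every $k\ge N$, so ``working modulo $\mathfrak I_{k+1}$'' in the image means working modulo $J$ itself. Your minimal-polynomial step culminates in ``$p(k)\rho(P_k)\in\rho(\mathfrak I_{k+1})$'', but $\rho(\mathfrak I_{k+1})=J$ already contains $\rho(P_k)$, so this conclusion carries no information whatsoever; likewise the ``downward induction on the weight tower'' has no decreasing quantity to induct on, since in $\rho(\mathfrak T)$ the tower is constant at $J$. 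The corrections $\rho(Q_{k+1})$ live in exactly the same ideal as the vector you are trying to kill, and no amount of resolvent or spectral bookkeeping built on $\operatorname{ad}\rho(C)$ alone can separate them: any such argument would have to assume the very vanishing it is meant to prove.

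The missing idea --- and the paper's key trick --- is to manufacture an \emph{exact} eigenvalue identity in the abstract algebra $\mathfrak T$, with no correction terms at all, so that it survives passage to the image without any quotient argument. The paper does this not with $\operatorname{ad}C$ but with the second-order operator $X=-\operatorname{ad}T(-\mathbf e_2)\operatorname{ad}T(-\mathbf e_1)$ applied to the difference $\Delta_k(\mathbf e_1;\mathbf e_2,\dots,\mathbf e_2)$, whose base point is $\mathbf e_1$ (not $\mathbf 0$) and whose difference directions are all $\mathbf e_2$. With these choices the coefficients $\det(\mathbf e_1,-\mathbf e_1)=0$ and $\det(\mathbf e_2,-\mathbf e_2)=0$ annihilate every would-be $(k+1)$-st difference term in the commutation formula, and one gets
\begin{equation*}
X\,\Delta_k(\mathbf e_1;\mathbf e_2,\dots,\mathbf e_2)=k\,\Delta_k(\mathbf e_1;\mathbf e_2,\dots,\mathbf e_2)
\end{equation*}
on the nose; since $X$ acts on a space of dimension at most $(\dim V)^2$, its spectrum is bounded, forcing these elements to vanish in $\mathfrak{gl}(V)$ for all large $k$. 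Note also that even after this vanishing, the paper still needs a second stage --- bracketing with $T(r_2\mathbf e_2)$ and $T(-\mathbf e_1)$, plus a degree-transform --- to show each $T(\mathbf r)$ is a polynomial matrix in $(r_1,r_2)$ of bounded degree, and only then does $\mathfrak I_kV=0$ follow. Your plan to ``rotate $\mathbf e_1$ to all primitive directions'' would at best give vanishing of the diagonal differences $\Delta_k(\mathbf r;\mathbf u,\dots,\mathbf u)$, and since the differences are not multilinear in the $\mathbf u_i$ (the additivity relation shifts the base point), these do not by themselves span $\mathfrak I_k$; some bootstrap like the paper's polynomiality argument is still required.
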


\begin{proof}
Consider the linear transformation
$$X:=-\operatorname{ad}T(-\mathbf e_2)\operatorname{ad}T(-\mathbf e_1)$$
on each ideal $\mathfrak I_k$. Since $\dim\mathfrak I_k\le\dim\mathfrak T\le(\dim V)^2<+\infty$, the eigenvalues of $X$ are bounded. On the other hand, an easy computation indicates that
\begin{equation*}
X\,\Delta_k(\mathbf e_1;\mathbf e_2,\cdots,\mathbf e_2)=k\,\Delta_k(\mathbf e_1;\mathbf e_2,\cdots,\mathbf e_2).
\end{equation*}
So there must exist some $K>3$ such that $\Delta_k(\mathbf e_1;\mathbf e_2,\cdots,\mathbf e_2)=0$ for all integers $k\ge K$. Therefore, for any $r_2\in\mathbb Z$ and $k\ge K$, we have
\begin{align*}
0&=[T(r_2\mathbf e_2),\,\Delta_k(\mathbf e_1;\mathbf e_2,\cdots,\mathbf e_2)]=r_2\,\Delta_{k+1}(\mathbf e_1;r_2\mathbf e_2,\mathbf e_2,\cdots,\mathbf e_2)\\
&=r_2\left\{\Delta_k(\mathbf e_1;\mathbf e_2,\cdots,\mathbf e_2)-\Delta_k(\mathbf e_1+r_2\mathbf e_2;\mathbf e_2,\cdots,\mathbf e_2)\right\}.
\end{align*}
Equivalently, $\Delta_{\mathbf e_2}\cdots\Delta_{\mathbf e_2}T(\mathbf e_1+r_2\mathbf e_2)=0$, with $\Delta_{\mathbf e_2}$ appearing $k$ times. It means that there exists a polynomial matrix $N(r_2)$ in the variable $r_2$ with degree at most $(K-1)$, such that $T(\mathbf e_1+r_2\mathbf e_2)=T(\mathbf e_1)+r_2N(r_2)$. In fact, one can more neatly write the expression by considering
\begin{align*}
r_2[N(r_2),\,T(-\mathbf e_1)]&=[T(\mathbf e_1+r_2\mathbf e_2)-T(\mathbf e_1),\,T(-\mathbf e_1)]\\
&=r_2\left\{T(r_2\mathbf e_2)-T(\mathbf e_1+r_2\mathbf e_2)-T(-\mathbf e_1)\right\}.
\end{align*}
From these relations it is obvious that $T(r_2\mathbf e_2)$ is a polynomial matrix in $r_2$ with degree at most $K$. Since $T(\mathbf 0)=0$, there is a polynomial matrix $P_2(r_2)$ in $r_2$ with degree at most $(K-1)$ such that $T(r_2\mathbf e_2)=r_2P_2(r_2)$.

Now using the degree-transform $\langle\mathbf e_2,\mathbf e_1\rangle$, one may convince oneself that $T(r_1\mathbf e_1)=r_1P_1(r_1)$ holds for a polynomial matrix $P_1(r_1)$ in $r_1$ with degree at most $(K-1)$. Then we have
\begin{align*}
r_1r_2[P_1(r_1),\,P_2(r_2)]&=[T(r_1\mathbf e_1),\,T(r_2\mathbf e_2)]\\
&=r_1r_2\left\{T(\mathbf r)-T(r_1\mathbf e_1)-T(r_2\mathbf e_2)\right\}.
\end{align*}
Hence $T(\mathbf r)=r_1P_1(r_1)+r_2P_2(r_2)+[P_1(r_1),P_2(r_2)]$ is a polynomial matrix in $r_1,r_2$ with degree at most $2K$. Thus for $k>2K$, we always get
$$\Delta_k(\mathbf r;\mathbf u_1,\cdots,\mathbf u_k)=\Delta_{\mathbf u_1}\cdots\Delta_{\mathbf u_k}T(\mathbf r)=0.$$
In other words, for all sufficiently large $k\in\mathbb N$ we have $\mathfrak I_k V=0$.
\end{proof}

\begin{corollary}
\label{vanishing}
If $V$ is a finite-dimensional simple $\mathfrak T$-module, then $\mathfrak I_3 V=0$.
\end{corollary}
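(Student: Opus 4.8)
The plan is to deduce the corollary from the two preceding lemmas by a finite descending (``backward'') induction on the index $k$. Lemma \ref{initial-condition} supplies the base of the descent: since $V$ is finite-dimensional, there exists an integer $K$, which I may arrange to satisfy $K>3$, with $\mathfrak I_KV=0$. Lemma \ref{inductive-part} supplies the descent step: for any $k\ge 3$, the hypothesis $\mathfrak I_{k+1}V=0$ forces $\mathfrak I_kV=0$ (here the simplicity of $V$ is used through Lemma \ref{ideal-vanish}, already absorbed into the statement of Lemma \ref{inductive-part}).

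Combining the two, I would argue as follows. Starting from $\mathfrak I_KV=0$ and applying Lemma \ref{inductive-part} with $k=K-1$, which is legitimate as long as $K-1\ge 3$, yields $\mathfrak I_{K-1}V=0$. Repeating this one step at a time, I successively obtain $\mathfrak I_{K-2}V=0$, $\mathfrak I_{K-3}V=0$, and so on, the chain $\mathfrak I_{k+1}\subseteq\mathfrak I_k$ guaranteeing that each descent is meaningful. The process can be continued precisely until the index reaches $3$, since the hypothesis of Lemma \ref{inductive-part} demands $k\ge 3$. After finitely many steps I arrive at $\mathfrak I_3V=0$, which is the assertion.

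There is essentially no genuine obstacle here; the substantive work has already been carried out in establishing the initial condition and the inductive part, and the corollary is just the assembly of these two halves into Rao's backward-induction scheme. The only point requiring care is the terminal index of the descent: Lemma \ref{inductive-part} is valid only for $k\ge 3$, so the argument legitimately lowers the index down to $3$ but cannot be pushed further to conclude $\mathfrak I_2V=0$. This is consistent with the structural goal of the paper, since $\mathfrak I_3V=0$ is exactly the statement that each $T(\mathbf r)$ acts as a quadratic polynomial matrix in $r_1$ and $r_2$, whereas the surviving second-order differences in $\mathfrak I_2$ are precisely the data that will later be identified with the $\mathfrak{sl}_2$-structure of the Larsson--Shen module $M^\alpha(\lambda)$.
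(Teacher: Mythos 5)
Your proof is correct and is exactly the paper's argument: the paper's own proof of Corollary \ref{vanishing} simply cites Lemma \ref{inductive-part} and Lemma \ref{initial-condition}, and your write-up merely makes the finite descent from $K$ down to $3$ explicit. Nothing more is needed.
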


\begin{proof}
It follows from Lemma \ref{inductive-part} and \ref{initial-condition} directly.
\end{proof}

\section{Proof of the main result}
\label{proof-of-main-result}

Now we are ready to prove the main theorem. First, let $V$ be a finite-dimensional simple $\mathfrak T$-module, and treat $\mathfrak T$ as a subalgebra of $\mathfrak{gl}(V)$. By Corollary \ref{vanishing}, we know that $\mathfrak I_3=0$. In particular, one may deduce from
$$\Delta_{\mathbf e_1}\Delta_{\mathbf e_1}\Delta_{\mathbf e_1}T(t\,\mathbf e_1)=\Delta_3(t\,\mathbf e_1;\mathbf e_1,\mathbf e_1,\mathbf e_1)=0$$
that there exist linear transformations $\tau_0$, $\tau_1$ and $\tau_2$ on $V$ such that
\begin{equation}
\label{quad-polyn}
T(t\,\mathbf e_1)=\tau_0 + t \tau_1 + t^2 \tau_2.
\end{equation}
Setting $t=0$, we get $\tau_0=0$, and setting $t=1$, we get $\tau_1+\tau_2=T(\mathbf e_1)$, finally setting $t=-1$, we get $\tau_2-\tau_1=T(-\mathbf e_1)$. This gives us
\begin{equation}
\label{tau1-tau2}
\tau_1=\frac{1}{2}\left\{T(\mathbf e_1)-T(-\mathbf e_1)\right\}, \quad~~ \tau_2=\frac{1}{2}\left\{T(\mathbf e_1)+T(-\mathbf e_1)\right\}.
\end{equation}
On the other hand, we have $[\tau_1,\,T(\pm\mathbf e_1)]=0$ and
\begin{gather*}
[T(\mathbf e_1)-T(-\mathbf e_1),\,T(\mathbf e_2)]=\Delta_3(\mathbf 0;\mathbf e_1,-\mathbf e_1,\mathbf e_2)=0.
\end{gather*}
Using the degree-transform $\langle\mathbf e_1,-\mathbf e_2\rangle$, we also have
$$[T(\mathbf e_1)-T(-\mathbf e_1),\,T(-\mathbf e_2)]=0.$$
And it is not hard to see that $\{T(\mathbf e_1),T(-\mathbf e_1),T(\mathbf e_2),T(-\mathbf e_2)\}$ generates $\mathfrak T$, so $[\tau_1,\mathfrak T]=0$. In other words, $\tau_1$ is an endomorphism of the $\mathfrak T$-module $V$. Therefore, Schur's Lemma (see, e.g., \cite{humphreys}) confirms us that $\tau_1$ is a scalar.

It follows from Equation (\ref{quad-polyn}) and (\ref{tau1-tau2}) that
$$T'(r_1\mathbf e_1):=T(r_1\mathbf e_1)-r_1\tau_1=r_1^2 \{T(\mathbf e_1)-\tau_1\}=r_1^2\,T'(\mathbf e_1).$$
Similarly, by using the degree-transform $\langle\mathbf e_2,\mathbf e_1\rangle$, we know that
$$T'(r_2\mathbf e_2):=T(r_2\mathbf e_2)-r_2\tau_2=r_2^2 \{T(\mathbf e_2)-\tau_2\}=r_2^2\,T'(\mathbf e_2),$$
where $\tau_2$ may also be treated as a scalar.
If we denote
$$e:=T'(\mathbf e_1), \quad f:=-T'(\mathbf e_2), \quad h:=T(\mathbf e_1)+T(\mathbf e_2)-T(\mathbf e_1+\mathbf e_2),$$ then a simple calculation shows that
$$T'(\mathbf r) := T(\mathbf r) - r_1\tau_1 - r_2\tau_2 = r_1^2 e - r_2^2 f - r_1 r_2 h.$$
Note $[e,f]=[T(\mathbf e_1),-T(\mathbf e_2)]=T(\mathbf e_1)+T(\mathbf e_2)-T(\mathbf e_1+\mathbf e_2)=h$, and it's easy to obtain $[h,e]=2e$, $[h,f]=-2f$, hence $\{h, e, f\}$ forms the Chevalley basis of $\mathfrak{sl}_2$. Also, note that the Lie algebra $\operatorname{Span}\{T'(\mathbf r)~|~\mathbf r\in\Gamma\}$ is isomorphic to $\mathfrak{sl}_2$, thus $V$ is at the same time, a simple $\mathfrak{sl}_2$-module.

\begin{lemma}
\label{t-module}
If $M=\oplus_{\mathbf s\in\Gamma} M_\mathbf s$ is a $\Gamma$-graded simple $(\mathfrak A,\mathfrak P)$-module, then $M_\mathbf 0$ is a simple $\mathfrak T$-module.
\end{lemma}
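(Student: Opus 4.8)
The plan is to realize $M_{\mathbf 0}$ as a simple $\mathfrak T$-module by transporting any $\mathfrak T$-submodule of $M_{\mathbf 0}$ to a $\Gamma$-graded $(\mathfrak A,\mathfrak P)$-submodule of $M$ through the invertible operators $\mathbf z^{\mathbf r}$, and then invoking the graded simplicity of $M$. First recall that $T(\mathbf r)=\mathbf z^{-\mathbf r}H(\mathbf r)$ maps each $M_{\mathbf s}$ into itself, so in particular every $T(\mathbf r)$ preserves $M_{\mathbf 0}$ and $\mathfrak T$ acts on $M_{\mathbf 0}$. I would also note at the outset that $M_{\mathbf 0}\neq 0$: since $\mathbf z^{\mathbf r}\mathbf z^{-\mathbf r}=\mathbf z^{\mathbf 0}=1$, each $\mathbf z^{\mathbf r}$ is a linear isomorphism from $M_{\mathbf 0}$ onto $M_{\mathbf r}$, so $M_{\mathbf 0}=0$ would force $M=\oplus_{\mathbf r}M_{\mathbf r}=0$, contrary to simplicity.

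Next, let $N\subseteq M_{\mathbf 0}$ be a nonzero $\mathfrak T$-submodule, so that $T(\mathbf r)N\subseteq N$ for all $\mathbf r\in\Gamma$. I would spread it across all weight spaces by setting
$$\widetilde N:=\sum_{\mathbf r\in\Gamma}\mathbf z^{\mathbf r}N=\oplus_{\mathbf r\in\Gamma}\mathbf z^{\mathbf r}N,$$
a nonzero $\Gamma$-graded subspace of $M$ whose degree-$\mathbf r$ component is $\mathbf z^{\mathbf r}N$. Its $\mathfrak A$-invariance is immediate, because $\mathbf z^{\mathbf s}(\mathbf z^{\mathbf r}N)=\mathbf z^{\mathbf r+\mathbf s}N\subseteq\widetilde N$.

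The main point is the $\mathfrak P$-invariance of $\widetilde N$. For $n\in N$ and $\mathbf r,\mathbf s\in\Gamma$ I would combine the commutation relation $[H(\mathbf s),\mathbf z^{\mathbf r}]=\det(\mathbf s,\mathbf r)\mathbf z^{\mathbf s+\mathbf r}$ with the identity $H(\mathbf s)=\mathbf z^{\mathbf s}T(\mathbf s)$ to compute
\begin{align*}
H(\mathbf s)(\mathbf z^{\mathbf r}n)&=\mathbf z^{\mathbf r}H(\mathbf s)n+\det(\mathbf s,\mathbf r)\mathbf z^{\mathbf s+\mathbf r}n\\
&=\mathbf z^{\mathbf r+\mathbf s}\bigl(T(\mathbf s)n\bigr)+\det(\mathbf s,\mathbf r)\mathbf z^{\mathbf s+\mathbf r}n.
\end{align*}
Because $N$ is $\mathfrak T$-stable we have $T(\mathbf s)n\in N$, so both summands lie in $\mathbf z^{\mathbf r+\mathbf s}N\subseteq\widetilde N$. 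Thus $\widetilde N$ is a nonzero $\Gamma$-graded $(\mathfrak A,\mathfrak P)$-submodule of $M$, and graded simplicity forces $\widetilde N=M$. Comparing degree-$\mathbf 0$ components gives $N=\mathbf z^{\mathbf 0}N=M_{\mathbf 0}$, so $M_{\mathbf 0}$ admits no nonzero proper $\mathfrak T$-submodule and is therefore simple. The only genuine obstacle is the $\mathfrak P$-invariance computation, and this collapses to the single observation $H(\mathbf s)=\mathbf z^{\mathbf s}T(\mathbf s)$ together with the $\mathfrak T$-stability of $N$; everything else is bookkeeping with the grading. I note also that the argument never invokes quasi-finiteness, so it holds for every $\Gamma$-graded simple $(\mathfrak A,\mathfrak P)$-module.
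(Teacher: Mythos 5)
Your proof is correct and is essentially the paper's own argument: both transport a $\mathfrak T$-submodule $N\subseteq M_{\mathbf 0}$ to the graded subspace $\oplus_{\mathbf r\in\Gamma}\mathbf z^{\mathbf r}N$, verify $(\mathfrak A,\mathfrak P)$-invariance via $[H(\mathbf s),\mathbf z^{\mathbf r}]=\det(\mathbf s,\mathbf r)\mathbf z^{\mathbf s+\mathbf r}$ together with $H(\mathbf s)N\subseteq\mathbf z^{\mathbf s}N$, and then invoke simplicity of $M$. The only difference is presentational (you argue directly that a nonzero $N$ must equal $M_{\mathbf 0}$, while the paper assumes $N$ proper and derives a contradiction), plus some extra bookkeeping on your part ($M_{\mathbf 0}\neq 0$, no use of quasi-finiteness) that the paper leaves implicit.
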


\begin{proof}
Suppose $V\subsetneq M_\mathbf 0$ be a nontrivial $\mathfrak T$-submodule, then for $\mathbf r\in\Gamma$, we get $\mathbf z^{-\mathbf r}H(\mathbf r)V\subseteq V$, or equivalently, $H(\mathbf r)V\subseteq\mathbf z^\mathbf rV$. Thus for all $\mathbf s\in\Gamma$, we have $H(\mathbf r)(\mathbf z^\mathbf sV)\subseteq [H(\mathbf r),\mathbf z^\mathbf s]V+\mathbf z^\mathbf sH(\mathbf r)V\subseteq\mathbf z^{\mathbf r+\mathbf s}V$. This means that $M'=\oplus_{\mathbf s\in\Gamma}\mathbf z^{\mathbf s}V$ is a nontrivial submodule of $M$, a contradiction.
\end{proof}

\begin{theorem}
\label{classification-theorem}
If $M=\oplus_{\mathbf s\in\Gamma}M_\mathbf s$ is a quasi-finite simple $(\mathfrak A,\mathfrak P)$-module, then $M\cong M^\alpha(\lambda)$ for some $\alpha\in\mathbb C^2$ and $\lambda\in\mathbb N$.
\end{theorem}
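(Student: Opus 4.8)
The plan is to build an explicit isomorphism $M\cong M^\alpha(\lambda)$ out of the $\mathfrak{sl}_2$-structure already exhibited on the degree-zero component, choosing $\alpha$ so as to absorb the scalars $\tau_1,\tau_2$. First I would note that $M_\mathbf 0\ne 0$: since $M$ is simple it is nonzero, and each $\mathbf z^\mathbf s$ is an isomorphism of $M_\mathbf 0$ onto $M_\mathbf s$, so every graded piece is nonzero and of the same finite dimension. By Lemma \ref{t-module}, $V:=M_\mathbf 0$ is a finite-dimensional simple $\mathfrak T$-module, and the discussion preceding the theorem makes it a simple $\mathfrak{sl}_2$-module via $e=T'(\mathbf e_1)$, $f=-T'(\mathbf e_2)$, $h=T(\mathbf e_1)+T(\mathbf e_2)-T(\mathbf e_1+\mathbf e_2)$. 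Hence $V\cong V(\lambda)$ for a unique $\lambda\in\mathbb N$; fix an $\mathfrak{sl}_2$-isomorphism $\phi:V\to V(\lambda)$.

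Next I would set $\alpha:=(-\tau_2,\tau_1)\in\mathbb C^2$, so that $\det(\mathbf r,\alpha)=r_1\tau_1+r_2\tau_2$. Since $\tau_1,\tau_2$ act as scalars, the earlier computation gives, on $V$,
$$T(\mathbf r)|_V=r_1\tau_1+r_2\tau_2+r_1^2e-r_2^2f-r_1r_2h=\det(\mathbf r,\alpha)+r_1^2e-r_2^2f-r_1r_2h,$$
which is exactly the operator by which $T(\mathbf r)$ acts on the degree-zero component of $M^\alpha(\lambda)$.

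The core step is to define $\Phi:M\to M^\alpha(\lambda)$ on graded pieces by $\Phi(\mathbf z^\mathbf s v)=|\phi(v),\mathbf s\rangle$ for $v\in V$ and $\mathbf s\in\Gamma$, and to verify it is an isomorphism of $(\mathfrak A,\mathfrak P)$-modules. As each $\mathbf z^\mathbf s:V\to M_\mathbf s$ and $\phi$ are bijections, $\Phi$ is a graded linear bijection, and the $\mathfrak A$-action is intertwined by construction since $\mathbf z^\mathbf r(\mathbf z^\mathbf s v)=\mathbf z^{\mathbf r+\mathbf s}v\mapsto|\phi(v),\mathbf r+\mathbf s\rangle$. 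For the $\mathfrak P$-action I would write $H(\mathbf r)=\mathbf z^\mathbf r T(\mathbf r)$ and use $[T(\mathbf r),\mathbf z^\mathbf s]=\det(\mathbf r,\mathbf s)\mathbf z^\mathbf s$ to compute, for $v\in V$,
$$T(\mathbf r)(\mathbf z^\mathbf s v)=\mathbf z^\mathbf s\bigl(\det(\mathbf r,\mathbf s)+T(\mathbf r)|_V\bigr)v,$$
and then compare with the identical computation in $M^\alpha(\lambda)$, where $\det(\mathbf r,\mathbf s+\alpha)=\det(\mathbf r,\mathbf s)+\det(\mathbf r,\alpha)$ yields the same operator. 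Applying $\mathbf z^\mathbf r$ recovers the full $H(\mathbf r)$-action, so $\Phi$ intertwines $\mathfrak P$ as well, and is therefore the desired isomorphism.

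The argument is short because the substantial work—forcing $\mathfrak I_3V=0$, extracting the quadratic form of $T(\mathbf r)$, and recognizing the $\mathfrak{sl}_2$-triple—has already been carried out. The only point demanding genuine care is the bookkeeping identifying the degree-zero operator $T(\mathbf r)|_V$ with the corresponding operator in the Larsson--Shen construction: the scalars $\tau_1,\tau_2$ must be matched precisely to $\alpha$ through $\det(\mathbf r,\alpha)$, and one must confirm that the triple $(e,f,h)$ produced on $V$ is the very one appearing in the definition of $M^\alpha(\lambda)$. Independence of the isomorphism class from the choice of $\phi$ is automatic.
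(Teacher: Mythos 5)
Your proposal is correct and follows essentially the same route as the paper: invoke Lemma \ref{t-module} to get that $M_\mathbf 0$ is a simple $\mathfrak T$-module, use the preceding $\mathfrak{sl}_2$-analysis to identify it with $V(\lambda)$, set $\alpha=(-\tau_2,\tau_1)$ so that $\det(\mathbf r,\alpha)=r_1\tau_1+r_2\tau_2$, and transport the action along the invertible operators $\mathbf z^\mathbf s$ to match the defining formula of $M^\alpha(\lambda)$. Your version merely makes explicit the intertwining map $\Phi$ and its verification, which the paper compresses into a single computation of $H(\mathbf r)|v,\mathbf s\rangle$.
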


\begin{proof}
By Lemma \ref{t-module}, we know that $M_\mathbf 0$ is a simple $\mathfrak T$-module, hence also a simple $\mathfrak{sl}_2$-module (follows from the previous analysis). Let $\lambda$ be the highest weight of $M_\mathbf 0$ with respect to the Chevalley basis $\{h, e, f\}$ of $\mathfrak{sl}_2$. The previous analysis also shows that for any $v\in M_\mathbf 0$,
\begin{equation*}
H(\mathbf r) v=(r_1\tau_1 + r_2\tau_2)\mathbf z^\mathbf r v + \mathbf z^\mathbf r (r_1^2 e - r_2^2 f - r_1 r_2 h)v,
\end{equation*}
where $\tau_1, \tau_2\in\mathbb C$. Denoting $\mathbf z^\mathbf sv$ by $|v,\mathbf s\rangle$ and noting that $\mathbf z^\mathbf s$ is invertible, one may deduce that, all elements of $M_\mathbf s$ can be represented by $|w,\mathbf s\rangle$ for some $w\in M_\mathbf 0$. Now
\begin{align*}
H(\mathbf r)|v,\mathbf s\rangle&=H(\mathbf r)\mathbf z^\mathbf sv=[H(\mathbf r),\mathbf z^\mathbf s]v+\mathbf z^\mathbf s H(\mathbf r)v\\
&=\left|\left\{\det(\mathbf r,\,\mathbf s+\alpha)+r_1^2e-r_2^2f-r_1r_2h\right\}v,\,\mathbf r+\mathbf s\right\rangle,
\end{align*}
where $\alpha=-\tau_2\mathbf e_1+\tau_1\mathbf e_2\in\mathbb C^2$. This proves that $M\cong M^\alpha(\lambda)$.
\end{proof}

\section*{Acknowledgments}

The first author would like to thank Professor Chengming Bai for great encouragement and Professor Syed Twareque Ali for beneficial discussions. A special thank is due to Doctor Syed Chowdhury who made helpful comments about the draft of this paper.






\end{document}